\theoremstyle{plain}
\newtheorem{theorem}{Theorem}
\newtheorem{lemma}[theorem]{Lemma}
\newtheorem{corollary}[theorem]{Corollary}
\theoremstyle{definition}
\newtheorem{definition}[theorem]{Definition}
\theoremstyle{remark}
\newcommand{\doi}[1]{\url{https://doi.org/#1}}
\def\slovakL{L\kern-0.25em\char39\kern-0.03em}
\DeclareMathOperator{\Tr}{Tr}
\newcommand{\nhalf}{\lfloor\frac{n-1}{2}\rfloor}
\newcommand{\khalf}{\lfloor\frac{k-1}{2}\rfloor}
\title{An efficient algorithm for generating\\ transmission irregular trees}
\author[1]{Ivan Stošić}
\author[2,3]{Ivan Damnjanović}
\affil[1]{Faculty of Sciences and Mathematics, University of Niš, Niš, Serbia}
\affil[2]{Faculty of Mathematics, Natural Sciences and Information Technologies, University of Primorska, Koper, Slovenia}
\affil[3]{Faculty of Electronic Engineering, University of Niš, Niš, Serbia}
\date{}
\begin{document}

\maketitle

\begin{abstract}
The transmission of a vertex in a connected graph is the sum of distances from that vertex to all the other vertices. A connected graph is transmission irregular if any two distinct vertices have different transmissions. We present an efficient algorithm that generates all the transmission irregular trees up to a given order, up to isomorphism.
\end{abstract}

\bigskip\noindent
{\bf Keywords:} transmission, transmission irregular graph, tree, isomorphism, generation, algorithm.

\bigskip\noindent
{\bf Mathematics Subject Classification:} 68R10, 68R05, 05C12, 05C05.

\section{Introduction}

Let $G$ be a simple connected graph with the vertex set $V(G)$ and the edge set $E(G)$. The \emph{distance} $d_G(u, v)$ between two vertices $u, v \in V(G)$ is the length of a shortest $(u, v)$-walk in $G$. The \emph{transmission}
\[
    \Tr_G(v) = \sum_{u \in V(G)} d_G(v, u)
\]
of a vertex $v \in V(G)$ is the sum of distances from $v$ to all the other vertices of $G$; see \cite{Abiad2017, RaMaPaASu2016, Soltes1991}. Alternatively, the transmission of a vertex may be referred to as the total distance of a vertex \cite{CaDoSco2019, KlaZe2013} or the status of a vertex \cite{ABriGri2021, QiaoZhan2020}.

The \emph{transmission set} of a connected graph $G$ is
\[
    \Tr(G) = \{ \Tr_G(v) \colon v \in V(G)\}.
\]
The \emph{Wiener complexity} \cite{AliKla2018, XuIIrKlaLi}, or alternatively the Wiener dimension \cite{AliAnKlaSkre2014}, of graph $G$ is defined as $|\Tr(G)|$. A connected graph $G$ is \emph{transmission irregular}, or TI for short, if $|\Tr(G)| = |V(G)|$, i.e., if any two distinct vertices have different transmissions.

As shown by Alizadeh and Klav\v{z}ar \cite[Theorem 2.1]{AliKla2018}, almost all graphs are not TI. For this reason, many graph theorists have found interest in investigating various properties of TI graphs. For instance, several infinite families of TI graphs that satisfy certain additional constraints have been found by Dobrynin \cite{Dobrynin2018, Dobrynin2019_AMC, Dobrynin2019_DAM, Dobrynin2019_DM}, as well as Bezhaev and Dobrynin \cite{BeDo2021, BeDo2022}. Notable work has also been done on the interval TI graphs \cite{AlYaSte2022_JAMC}, which are a subclass of the TI graphs, and the closely related stepwise TI graphs \cite{AliKla2023, AlYaSte2022_AMC, DoSha2020} and generalized stepwise TI graphs \cite{AliKlaMo2024}.

In particular, there are many results on TI trees; see \cite{Dobrynin2019_DM, QiaoZhan2020, XuKla2021, XuTianKla2023}. For instance, the complete characterization of TI starlike trees was recently obtained through a series of papers \cite{AliKla2018, AlYaSte2020, Damnjanovic2024}. The interest in TI trees is natural because, as it turns out, it is easier to inspect whether a given graph is TI when (almost) all edges are a cut edge. Besides, many realizability problems involving TI graphs can be resolved by using either trees, unicyclic or bicyclic graphs; see \cite{XuTianKla2023, DamSteAlYa2024, AXu}.

Here, we present an efficient algorithm for generating all the TI trees up to a given order, up to isomorphism. Although there exist efficient tools capable of finding all the trees of a given order, such as the program \texttt{gentreeg} from the package \texttt{nauty} \cite{McKayPip2014}, there is a lack of built-in support for TI trees. For example, there are $14,830,871,802$ trees on $30$ vertices \cite{Sloane}, while only $302,163$ of them are TI. Thus, it is both time and memory consuming to search for TI trees through the na\"{i}ve method of generating all the trees of the same order and then selecting the TI trees among them. Our goal is to facilitate future research on TI trees by providing a tool that efficiently outputs only the TI trees up to a given order without having to do a full filtration over the set of all the trees.

In Section \ref{sc_properties}, we overview some known results on TI trees and obtain several auxiliary lemmas to be used later. Afterwards, we develop the generation algorithm and provide some implementational details in Section \ref{sc_algorithm}. The developed program implemented in \texttt{C++20} can be found in \cite{Supplementary}, while Section~\ref{sc_results} contains a brief discussion on the execution results of the implemented program.

\section{Preliminaries}\label{sc_properties}

For any graph $G$, we will use $|G|$ and $d_G(v)$ to denote the order of $G$ and the degree of vertex $v \in V(G)$ in $G$, respectively. Also, for any connected graph $G$ and two vertices $u, v \in V(G)$, let $n_G(u, v)$ denote the number of vertices in $G$ that are closer to $u$ than to $v$. With this in mind, we state the following folklore lemma whose proof can be found, e.g., in \cite[Lemma~6]{Damnjanovic2024}.

\begin{lemma}\label{neighbors_lemma}
For any connected graph $G$ and edge $u v \in E(G)$, we have
\[
    \Tr_G(u) - \Tr_G(v) = n_G(v, u) - n_G(u, v).
\]
\end{lemma}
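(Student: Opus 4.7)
The plan is to expand both transmissions as sums over all vertices and compare the contribution of each vertex term by term. Write
\[
    \Tr_G(u) - \Tr_G(v) = \sum_{w \in V(G)} \bigl( d_G(u,w) - d_G(v,w) \bigr),
\]
and then analyze the summand for a fixed $w \in V(G)$. The key observation is that since $uv \in E(G)$, any $(v,w)$-path together with the edge $uv$ yields a $(u,w)$-walk of length $d_G(v,w)+1$, and symmetrically, so
\[
    \bigl| d_G(u,w) - d_G(v,w) \bigr| \le 1.
\]
Thus each summand takes one of the three values $-1$, $0$, or $+1$.

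Next I would partition $V(G)$ according to which case occurs. The summand equals $-1$ exactly when $d_G(u,w) < d_G(v,w)$, i.e., when $w$ is strictly closer to $u$ than to $v$; by definition these are precisely the vertices counted by $n_G(u,v)$. Analogously, the summand equals $+1$ exactly for the vertices counted by $n_G(v,u)$, while vertices equidistant from $u$ and $v$ contribute $0$. Substituting,
\[
    \Tr_G(u) - \Tr_G(v) = (-1)\cdot n_G(u,v) + 0 + (+1)\cdot n_G(v,u) = n_G(v,u) - n_G(u,v),
\]
which is the desired identity.

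There is no real obstacle here; the only subtle point worth emphasizing is that the bound $|d_G(u,w)-d_G(v,w)|\le 1$ is exactly what makes the three-way partition into "closer to $u$", "closer to $v$", and "equidistant" exhaustive, so that no vertex outside $n_G(u,v) \cup n_G(v,u)$ can contribute. The vertices $u$ and $v$ themselves are handled by the same accounting: $w=u$ gives $d_G(u,u)-d_G(v,u) = -1$ and $u$ is counted in $n_G(u,v)$, and symmetrically for $w=v$.
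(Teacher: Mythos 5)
Your proof is correct. Note that the paper does not supply its own proof of this lemma: it is stated as a folklore result with a citation to an external reference (Lemma~6 of \cite{Damnjanovic2024}), so there is no in-paper argument to compare against. Your argument --- expanding $\Tr_G(u) - \Tr_G(v)$ termwise, using the edge $uv$ and the triangle inequality to confine each difference $d_G(u,w) - d_G(v,w)$ to $\{-1, 0, +1\}$, and then partitioning $V(G)$ into vertices closer to $u$, closer to $v$, and equidistant --- is precisely the standard proof of this folklore fact, and it handles the endpoints $u$ and $v$ correctly as part of the same accounting.
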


As noted in \cite{AXu}, the next result directly follows from Lemma \ref{neighbors_lemma}.

\begin{corollary}\label{components_cor}
For any TI tree $T$ and vertex $v \in V(T)$, the components of $T - v$ have pairwise distinct orders.
\end{corollary}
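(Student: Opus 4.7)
The plan is to proceed by contradiction: assume that $T$ is TI and that two components of $T - v$, say $T_i$ and $T_j$, have the same order. I will then use Lemma~\ref{neighbors_lemma} on the two pendant edges at $v$ leading into $T_i$ and $T_j$ to produce two distinct vertices with identical transmissions, contradicting the TI assumption.

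Concretely, let $u_i$ (resp.\ $u_j$) be the unique neighbor of $v$ in $T_i$ (resp.\ $T_j$). Because $T$ is a tree, the edge $v u_i$ is a cut edge, so removing it splits $T$ into the subtree $T_i$ on one side and a subtree of order $|T|-|T_i|$ containing $v$ on the other. In particular, $n_T(u_i, v) = |T_i|$ and $n_T(v, u_i) = |T| - |T_i|$, and analogously for $u_j$. Applying Lemma~\ref{neighbors_lemma} to the edges $v u_i$ and $v u_j$ then yields
\[
    \Tr_T(u_i) = \Tr_T(v) + |T| - 2|T_i|, \qquad \Tr_T(u_j) = \Tr_T(v) + |T| - 2|T_j|.
\]
If $|T_i| = |T_j|$, these two expressions agree, so $\Tr_T(u_i) = \Tr_T(u_j)$ while $u_i \neq u_j$ (they lie in different components of $T - v$), contradicting the TI property.

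There is really no obstacle here; the only minor care needed is in correctly identifying which vertices are closer to $u_i$ than to $v$, which is immediate from the fact that every edge of a tree is a cut edge. The statement and the proof would both fail for general TI graphs, precisely because the cut-edge structure is essential for computing $n_T(u_i, v)$ so cleanly.
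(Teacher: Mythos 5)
Your proof is correct and is precisely the argument the paper has in mind: the paper gives no explicit proof, stating only that the corollary "directly follows from Lemma~\ref{neighbors_lemma}," and your application of that lemma to the edges $vu_i$ and $vu_j$ (using the cut-edge structure to get $\Tr_T(u_i) = \Tr_T(v) + |T| - 2|T_i|$) is exactly that direct derivation, with the signs and counts handled correctly.
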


We also need the following short lemma.

\begin{lemma} \label{centroid_lemma}
Let $T$ be a TI tree of order $n \ge 2$, with $v \in V(T)$ having the minimum transmission. Then $d_T(v) \ge 3$ and all the components of $T - v$ have fewer than $\frac{n}{2}$ vertices.
\end{lemma}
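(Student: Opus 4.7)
The strategy is to extract both conclusions from Lemma \ref{neighbors_lemma} applied along the edges incident to $v$. I would start with the component-size bound, since it feeds directly into the degree argument.

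Fix any neighbor $u$ of $v$ and let $C_u$ denote the component of $T - v$ that contains $u$. Since $T$ is a tree, removing the edge $uv$ cleaves $T$ into two subtrees of orders $|C_u|$ and $n - |C_u|$, so that $n_T(u,v) = |C_u|$ and $n_T(v,u) = n - |C_u|$. Lemma \ref{neighbors_lemma} then yields
\[
    \Tr_T(u) - \Tr_T(v) = n - 2|C_u|.
\]
The minimality of $\Tr_T(v)$ forces $|C_u| \leq n/2$, and the TI hypothesis rules out equality; hence $|C_u| < n/2$. Running this over every neighbor of $v$ proves that every component of $T - v$ has strictly fewer than $n/2$ vertices.

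For the degree bound, I would proceed by contradiction, assuming $d_T(v) \leq 2$. If $d_T(v) = 1$, the unique component of $T - v$ has order $n - 1$, contradicting $n - 1 < n/2$ once $n \geq 2$. If $d_T(v) = 2$, the two components of $T - v$ have distinct orders by Corollary \ref{components_cor}, and each is strictly smaller than $n/2$ by the first part; a short integer check shows that two such distinct positive integers sum to at most $n - 3$ when $n$ is even and at most $n - 2$ when $n$ is odd, both strictly less than $n - 1$, yet these two components must together account for all $n - 1$ vertices of $T - v$.

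I do not anticipate a substantive obstacle here. The only subtle point is that the TI hypothesis needs to be invoked twice: once to upgrade $|C_u| \leq n/2$ to a strict inequality, and once through Corollary \ref{components_cor} to forbid equal component sizes in the $d_T(v) = 2$ case. Dropping either instance would leave just enough slack for a balanced bisection of the $n - 1$ remaining vertices to slip through and evade the contradiction.
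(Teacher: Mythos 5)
Your proof is correct and follows essentially the same route as the paper: Lemma \ref{neighbors_lemma} gives the strict bound $|C_u| < \frac{n}{2}$ on component sizes (with TI supplying strictness), and Corollary \ref{components_cor} combined with that bound rules out $d_T(v) = 2$ by an arithmetic contradiction. The only cosmetic difference is that you spell out the $d_T(v)=1$ case via the size bound, whereas the paper simply observes $d_T(v) \ge 2$, and your parity-split integer check replaces the paper's one-line chain $n = 1 + |T_1| + |T_2| \le 2\,|T_1| < n$.
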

\begin{proof}
Let $k = d_T(v)$ and observe that $k \ge 2$. Let $T_1, T_2, \ldots, T_k$ be the components of $T - v$. From Lemma \ref{neighbors_lemma}, we get $|T_i| < \frac{n}{2}$ for each $i \in [k]$. Now, by way of contradiction, suppose that $k = 2$. By Corollary \ref{components_cor}, we may assume without loss of generality that $\frac{n}{2} > |T_1| > |T_2|$. Therefore, we obtain
\[
    n = |T| = 1 + |T_1| + |T_2| \le 2 \, |T_1| < n,
\]
which yields a contradiction.
\end{proof}

Recall that an \emph{ordered rooted tree} is a tree with the additional properties that one specific vertex is designated as the root and the children of each vertex are totally ordered in some manner. For any ordered rooted tree $T$, we will use $r(T)$ to denote the root of $T$ and $c_T(v) = (u_1, u_2, \ldots, u_k)$ to denote the sequence of children of a vertex $v \in V(T)$. Note that $c_T(v)$ is an empty sequence when $v$ is a leaf in $T$. Also, for any $v \in V(T), \, v \neq r(T)$, let $p_T(v)$ denote the parent of $v$, and for any $v \in V(T)$, let $\ell_T(v)$ and $T_v$ denote the level of $v$ in $T$ and the rooted subtree of $T$ whose root is $v$, respectively. We will take the level numbering to be zero-based, so that $\ell_T(v) = d_T(v, r(T))$ and the only vertex on level 0 is the root $r(T)$. Besides, let $D(T)$ denote the depth, i.e., the maximum level, of $T$. We now introduce the notions of unbalanced tree and weakly transmission irregular tree as follows.

\begin{definition}
An ordered rooted tree $T$ is \emph{unbalanced} if we have $|T_{u_1}| < |T_{u_2}| < \cdots < |T_{u_k}|$ for every $v \in V(T), \, c_T(v) = (u_1, u_2, \ldots, u_k)$. 
\end{definition}

\begin{definition}
An ordered rooted tree is \emph{weakly transmission irregular}, or WTI for short, if any two distinct vertices from the same level have different transmissions. 
\end{definition}

From Corollary \ref{components_cor}, we observe that every TI tree $T$ has a unique representation as an unbalanced tree whose root is the vertex with the minimum transmission. We will refer to this unbalanced tree as the \emph{canonical representation} of $T$. Note that the canonical representation of $T$ is WTI for any TI tree $T$. We proceed with the following two auxiliary lemmas.

\begin{lemma}\label{transmission_root_lemma}
For any ordered rooted tree $T$, we have
\[
    \Tr_T(r(T)) = \sum_{u \in c_T(r(T))} \Tr_{T_u}(u) + |T| - 1 .
\]
\end{lemma}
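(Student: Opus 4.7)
The plan is to compute $\Tr_T(r(T))$ directly by partitioning the contributions according to which child-subtree each vertex belongs to. Write $r = r(T)$ and $c_T(r) = (u_1, u_2, \ldots, u_k)$. Every vertex $v \ne r$ lies in exactly one $V(T_{u_i})$, so
\[
    \Tr_T(r) = \sum_{v \in V(T), v \ne r} d_T(r,v) = \sum_{i=1}^{k} \sum_{v \in V(T_{u_i})} d_T(r,v).
\]

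The key observation is that in a tree, the unique path from $r$ to any $v \in V(T_{u_i})$ must pass through the child $u_i$, so $d_T(r,v) = 1 + d_{T_{u_i}}(u_i, v)$ (this formula also holds for $v = u_i$, giving $1 + 0 = 1$). Substituting this identity into the inner sum yields, for each $i$,
\[
    \sum_{v \in V(T_{u_i})} d_T(r,v) = |T_{u_i}| + \sum_{v \in V(T_{u_i})} d_{T_{u_i}}(u_i, v) = |T_{u_i}| + \Tr_{T_{u_i}}(u_i).
\]

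Summing over $i = 1, \ldots, k$ and noting that $\sum_{i=1}^{k} |T_{u_i}| = |T| - 1$ (since the subtrees partition $V(T) \setminus \{r\}$) gives the claimed identity. The only thing that requires any care is justifying $d_T(r,v) = 1 + d_{T_{u_i}}(u_i, v)$, which follows from the uniqueness of paths in a tree; everything else is bookkeeping.
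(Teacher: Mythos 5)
Your proof is correct and follows essentially the same route as the paper's: both partition $V(T)\setminus\{r(T)\}$ into the child subtrees, apply the identity $d_T(r(T),w) = d_{T_u}(u,w)+1$ inside each subtree, and sum, using $\sum_u |T_u| = |T|-1$. The only difference is that you explicitly justify the distance identity via uniqueness of paths, which the paper leaves implicit.
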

\begin{proof}
Let $v = r(T)$. The result follows from
\begin{align*}
    \pushQED{\qed}
    \Tr_T(v) &= \sum_{w \in V(T) \setminus \{v\}} d_T(v, w) = \sum_{u \in c_T(v)} \sum_{w \in V(T_u)} d_T(v, w)\\
    &= \sum_{u \in c_T(v)} \sum_{w \in V(T_u)} \left( d_{T_u}(u, w) + 1 \right) = \sum_{u \in c_T(v)} \left( \Tr_{T_u}(u) + |T_u| \right)\\
    &= \sum_{u \in c_T(v)} \Tr_{T_u}(u) + \sum_{u \in c_T(v)} |T_u| = \sum_{u \in c_T(v)} \Tr_{T_u}(u) + |T| - 1. \qedhere
\end{align*}
\end{proof}

\begin{lemma}\label{transmission_difference_lemma}
Let $T$ be an ordered rooted tree, $n = |T|$, $v \in V(T)$ and $n' = |T_v|$. Also, let $u \in V(T_v), \, u \neq v$, and $u' = p_{T_v}(u) = p_T(u)$. Then we have
\[
    \Tr_T(u) - \Tr_{T_v}(u) = \left( \Tr_T(v) - \Tr_{T_v}(v) \right) + (n - n') \, \ell_{T_v}(u).
\]
\end{lemma}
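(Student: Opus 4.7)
The plan is to split the transmission sum of $u$ (in both trees) according to whether the summed-over vertex lies inside $V(T_v)$ or outside it. Writing
\[
    \Tr_T(u) = \sum_{w \in V(T_v) \setminus \{u\}} d_T(u,w) + \sum_{w \in V(T) \setminus V(T_v)} d_T(u,w),
\]
the first sum equals $\Tr_{T_v}(u)$, since for any two vertices $u, w \in V(T_v)$ the unique path between them in the tree $T$ stays inside $T_v$, so $d_T(u,w) = d_{T_v}(u,w)$. Thus the whole problem reduces to evaluating the second sum $S_u := \sum_{w \in V(T) \setminus V(T_v)} d_T(u,w)$, and the claim becomes
\[
    S_u = (\Tr_T(v) - \Tr_{T_v}(v)) + (n - n')\,\ell_{T_v}(u).
\]

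The key geometric fact I would invoke next is that for every $w \in V(T) \setminus V(T_v)$, the unique $u$--$w$ path in $T$ must pass through $v$, because removing $v$ disconnects $V(T_v) \setminus \{v\}$ from the rest of $T$. Consequently $d_T(u,w) = d_T(u,v) + d_T(v,w)$, and since $u \in V(T_v)$ we have $d_T(u,v) = d_{T_v}(u,v) = \ell_{T_v}(u)$. Substituting this into $S_u$ and pulling the constant $\ell_{T_v}(u)$ out of the sum yields
\[
    S_u = (n - n')\,\ell_{T_v}(u) + \sum_{w \in V(T) \setminus V(T_v)} d_T(v,w).
\]

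Finally, the remaining sum $\sum_{w \in V(T) \setminus V(T_v)} d_T(v,w)$ is exactly $\Tr_T(v) - \Tr_{T_v}(v)$, by the very same splitting argument applied at $u = v$ (noting that $d_T(v,w) = d_{T_v}(v,w)$ for $w \in V(T_v)$). Combining the three displayed identities gives the lemma. I do not expect any real obstacle: the proof is purely a matter of carefully isolating the contribution of vertices outside $T_v$ and using the tree property that $v$ is a cut vertex separating $V(T_v) \setminus \{v\}$ from $V(T) \setminus V(T_v)$; the hypothesis $u' = p_{T_v}(u) = p_T(u)$ is not needed for the calculation itself but merely emphasizes that the parent structures of $T_v$ and $T$ coincide on $V(T_v) \setminus \{v\}$, which is what underlies the equality $d_T = d_{T_v}$ inside $T_v$.
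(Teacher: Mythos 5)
Your proof is correct, but it takes a genuinely different route from the paper's. The paper derives the identity from Lemma \ref{neighbors_lemma}: applying that lemma to the edge $uu'$ once inside $T_v$ and once inside $T$ gives
\[
    \Tr_T(u) - \Tr_{T_v}(u) = \bigl( \Tr_T(u') - \Tr_{T_v}(u') \bigr) + (n - n'),
\]
i.e., the quantity $\Tr_T(\cdot) - \Tr_{T_v}(\cdot)$ increases by the constant $n - n'$ with each step down from $v$, and the lemma then follows by induction along the $(u,v)$-path --- which is also why the hypothesis $u' = p_{T_v}(u) = p_T(u)$ appears in the statement, exactly as you suspected (your argument never needs it). Your approach instead computes the difference directly: splitting the transmission sum over $V(T_v)$ versus $V(T) \setminus V(T_v)$, using that distances within $T_v$ agree in both trees, and that every path from $u$ to an outside vertex passes through the cut vertex $v$, contributing $d_T(u,v) = \ell_{T_v}(u)$ per outside vertex. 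Your version is self-contained and arguably more illuminating --- it exhibits the difference $\Tr_T(u) - \Tr_{T_v}(u)$ explicitly as the sum $\sum_{w \notin V(T_v)} d_T(u,w)$ and explains the factor $n - n'$ as a count of outside vertices --- while the paper's version is shorter in context because it reuses machinery (Lemma \ref{neighbors_lemma}) already established for other purposes. Both proofs are complete and correct.
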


\begin{proof}
From Lemma~\ref{neighbors_lemma}, we get
\[
    \Tr_{T_v}(u) = \Tr_{T_v}(u') + (n' - 2 \, |T_u|) \quad \mbox{and} \quad \Tr_T(u) = \Tr_T(u') + (n - 2 \, |T_u|),
\]
which implies
\[
    \Tr_T(u) - \Tr_{T_v}(u) = \left( \Tr_T(u') - \Tr_{T_v}(u') \right) + (n - n') .
\]
The result now follows by induction on the $(u, v)$-path.
\end{proof}

The next result follows directly from Lemma \ref{transmission_difference_lemma}.

\begin{corollary} \label{subtree_lemma}
Let $T$ be the canonical representation of a TI tree. Then $T_v$ is WTI for every $v \in V(T)$.
\end{corollary}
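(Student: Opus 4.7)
The plan is to fix an arbitrary vertex $v \in V(T)$ and show that any two distinct vertices $u_1, u_2 \in V(T_v)$ with $\ell_{T_v}(u_1) = \ell_{T_v}(u_2)$ satisfy $\Tr_{T_v}(u_1) \neq \Tr_{T_v}(u_2)$.

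First, I would dispose of the trivial case when the common level equals $0$: the only vertex at level $0$ in $T_v$ is $v$ itself, so this case cannot arise for distinct $u_1, u_2$. Thus, from here on, both $u_1$ and $u_2$ are distinct from $v$, which is exactly the hypothesis needed to apply Lemma \ref{transmission_difference_lemma}.

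Applying that lemma to each of $u_1, u_2$ with respect to $v$, and then subtracting the two resulting identities, the terms $\Tr_T(v) - \Tr_{T_v}(v)$ cancel, and so do the terms $(n - n') \, \ell_{T_v}(u_i)$ because $\ell_{T_v}(u_1) = \ell_{T_v}(u_2)$. This yields
\[
    \Tr_{T_v}(u_1) - \Tr_{T_v}(u_2) = \Tr_T(u_1) - \Tr_T(u_2).
\]
Since $T$ is the canonical representation of a TI tree, the underlying unrooted tree is TI, hence any two distinct vertices of $T$ have different transmissions in $T$. Therefore, the right-hand side is nonzero, forcing $\Tr_{T_v}(u_1) \neq \Tr_{T_v}(u_2)$, which is exactly the WTI condition for $T_v$.

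There is no real obstacle here; the corollary is essentially a direct consequence of Lemma \ref{transmission_difference_lemma} once one notices that the level-dependent correction term $(n - n') \, \ell_{T_v}(u)$ and the $v$-dependent correction $\Tr_T(v) - \Tr_{T_v}(v)$ both depend only on the level of $u$ (and on $v$), so they cancel for same-level pairs. The only minor bookkeeping is ruling out $u = v$ before invoking the lemma.
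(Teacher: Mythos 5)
Your proof is correct and follows essentially the same route as the paper: both apply Lemma \ref{transmission_difference_lemma} to same-level vertices so that the correction terms $\Tr_T(v) - \Tr_{T_v}(v)$ and $(n - n')\,\ell_{T_v}(u_i)$ cancel, reducing distinctness of transmissions in $T_v$ to distinctness in $T$. The only cosmetic difference is that you invoke the full TI property of the underlying tree where the paper invokes the (weaker, already-noted) WTI property of the canonical representation; your explicit handling of the level-$0$ case and of $u \neq v$ is fine but not a substantive departure.
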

\begin{proof}
If two distinct vertices $u_1, u_2 \in V(T_v)$ such that $\ell_{T_v}(u_1) = \ell_{T_v}(u_2)$ have different transmissions in $T$, then Lemma \ref{transmission_difference_lemma} implies that they also have different transmissions in $T_v$. Since $T$ is WTI, we conclude that $T_v$ is also WTI.
\end{proof}

\section{Algorithm overview}\label{sc_algorithm}

In this section, we describe an algorithm that generates all TI trees of order at most $n$ and with maximum degree at most $m$, where $n, m \in \mathbb{N}$ are given as arguments.
The algorithm operates on WTI trees and filters the results for TI trees as the final step.
The core component of the algorithm is a function that combines one or more WTI trees $T_1, T_2, \ldots, T_k$ with strictly increasing orders into one WTI tree $T$ by introducing a new root vertex and connecting it to the root vertices of $T_1, T_2, \ldots, T_k$.
Clearly, when joining several such trees, it is possible that the resulting tree is not a WTI tree. In that case, the function reports failure.
Subsection \ref{wti_trees_subsection} describes this function in detail, along with the data structure used to store information on WTI trees. Subsection \ref{wti_algo} describes a function that generates all WTI trees up to a given order, while Subsection \ref{overall_algo} elaborates the strategy employed to efficiently generate TI trees using the above functions.

\subsection{Manipulating WTI trees} \label{wti_trees_subsection}

By convention, the root vertex of each WTI tree $T$ is labeled $0$, while the other vertices are $1, 2, \ldots, n - 1$, where $n = |T|$. For each WTI tree $T$, the algorithm stores the minimum information needed to efficiently perform the joining procedure:
\begin{enumerate}[label=\textbf{(\roman*)}]
    \item the order, i.e., $|T|$; 
    \item the number of levels, i.e., $D(T) + 1$;
    \item the parent array $p_0, p_1, \ldots, p_{n-1}$, where $p_i$ is the label of $p_T(i)$, the parent of vertex~$i$, while $p_0$ is left undefined;
    \item for each level $i \in \{ 0, 1, \ldots, D(T) \}$, the list $L_i$ of transmission values for all the vertices from level $i$.
\end{enumerate}

Given WTI trees $T_1, T_2, \ldots, T_k$, we first compute the order and number of levels of $T$ by using the formulae
\[
    |T| = 1 + \sum_{i=1}^k |T_i| \qquad \mbox{and} \qquad D(T) = 1 + \max_{i = 1}^k D(T_i).
\]
Afterwards, for all $i \in [k]$, the vertices from the tree $T_i$ are assigned new labels in $T$. If a vertex has the label $x$ in $T_i$, then in $T$, it is given the label $x + 1 + \sum_{j=1}^{i-1}|T_j|$. The root vertex of $T$ is, of course, given the label $0$. We trivially observe that this procedure assigns unique labels from $0, 1, \ldots, |T| - 1$ to all the vertices of $T$.

We apply Lemma \ref{transmission_root_lemma} to compute $\Tr_T(r(T))$. Afterwards, $\Tr_T(u)$ is computed for every $u \in c_T(r(T))$ via Lemma \ref{neighbors_lemma}. Finally, for each $i \in [k]$, we compute $\Tr_T(u)$ for every $u \in V(T_i)$ by using Lemma \ref{transmission_difference_lemma}. Each computed transmission from level $i$ of $T$ is accordingly stored in the corresponding list $L_i$. The function reports failure if any of the obtained lists $L_i$ does not contain distinct values.

\subsection{Generating WTI trees} \label{wti_algo}

The function \texttt{generateWTITrees} that generates WTI trees takes two arguments:\ $n \in \mathbb{N}$, the upper bound on the tree order, and $h \in \mathbb{N}$, the upper bound on the number of children that a tree vertex can have. Denote the function output by \texttt{result}. This value is an array of collections of WTI trees, as described in Subsection \ref{wti_trees_subsection}, where \texttt{result[k]} contains all the WTI trees of order $k$ whose vertices have at most $h$ children.

The function uses two auxiliary functions. These functions work in-place and output their results by invoking a provided callback once for each resulting item:
\begin{enumerate}[label=\textbf{(\roman*)}]
    \item \texttt{generateIncreasing}:\ generates all the strictly increasing sequences $(s_1, s_2, \linebreak \ldots, s_q) \in \mathbb{N}^q$ such that $\sum_{i=1}^q s_i = \alpha$, $s_q \le \beta$ and $q \le \gamma$, where $\alpha, \beta, \gamma \in \mathbb{N}$ are given parameters;
    \item \texttt{cartesianProduct}:\ computes the cartesian product of a given sequence of collections of items.
\end{enumerate}

The collection \texttt{result[1]} is initialized with a single tree --- the unique tree of order $1$, which is a WTI tree. The trees of orders $k \in \{2, 3, \ldots, n\}$ are then produced, in that order, by computing the cartesian product of the sequence
\[
    (\mathtt{result[s_1]}, \mathtt{result[s_2]}, \ldots, \mathtt{result[s_q]})
\]
for each strictly increasing sequence $(s_1, s_2, \ldots, s_q) \in \mathbb{N}^q$ such that $\sum_{i=1}^q s_i = k - 1$ and $q \le h$. These sequences are, in turn, generated by the function \texttt{generateIncreasing}. Each tuple of such a cartesian product is then combined using the procedure described in Subsection \ref{wti_trees_subsection} and stored in \texttt{result[k]} if the joining procedure was successful.

\subsection{Generating TI trees} \label{overall_algo}

The function \texttt{generateTITrees} that generates TI trees takes three arguments:\ $n \in \mathbb{N}$, the upper bound on the tree order, $m \in \mathbb{N}$, the upper bound on the tree maximum degree, and a function \texttt{func} that will be called once for each generated TI tree. This approach, inspired by the functional programming paradigm, augments the flexibility of the code and improves performance, as it does not require all TI trees to be held in memory at once.

The function first calls \texttt{generateWTITrees} to generate all the WTI trees of order at most $\nhalf$ such that no tree vertex has more than $m$ children. Note that some of the generated trees may violate the constraint on the maximum degree of a vertex. This happens when a generated tree has a non-root vertex with $m$ children. Such a vertex will thus have a degree of $m + 1$ and, therefore, the tree must be discarded.

Two auxiliary functions are introduced:
\begin{enumerate}[label=\textbf{(\roman*)}]
    \item \texttt{getMaxDegree}:\ finds the maximum degree and the number of children of the root vertex of a given WTI tree;
    \item \texttt{isTITree}:\ inspects whether the given WTI tree is the canonical representation of a TI tree. This is done by checking the list of transmission values for uniqueness and by checking if the minimum value is in the root vertex.
\end{enumerate}

For each tree generated by \texttt{generateWTITrees}, if the maximum degree exceeds $m$, then the tree is discarded.
For all the remaining trees, we check if it is the canonical representation of a TI tree, and in the affirmative case, this is reported by invoking the callback \texttt{func}. In addition, regardless of whether we get the canonical representation of a TI tree, if the tree root has strictly fewer than $m$ children, it is stored in the auxiliary collection \texttt{subtrees[k]}, where $k$ is the tree order.

Until now, we have described how the function generates the TI trees with orders up to $\nhalf$. To generate the trees of orders $k \in \{\nhalf + 1, \ldots,n-1,n\}$, we rely on Lemma \ref{centroid_lemma}. We directly generate WTI trees of order $k$ whose roots have the minimum transmission by placing an upper bound on the order of subtrees to $\khalf$. We invoke \texttt{generateIncreasing} by setting $\alpha = k-1$, $\beta = \khalf$ and $\gamma = m$. Then, for each generated sequence $(s_1, s_2, \ldots, s_q) \in \mathbb{N}^q$, we take the cartesian product of collections $\mathtt{subtree[s_1]}, \mathtt{subtree[s_2]}, \ldots, \mathtt{subtree[s_q]}$, join the trees from each tuple and check if the joined tree is a TI tree. If so, the tree is outputted by calling \texttt{func}.

The function \texttt{generateTITrees} is easily parallelized by executing the generation of each cartesian product as a separate asynchronous task. Additional optimizations done in the \texttt{C++20} code include the use of fixed-size arrays for representing WTI trees, using minimum width integer types and using a custom implementation of arrays of arrays whose layout is known upon construction.

The main function of the program takes two mandatory arguments:\ the mode and the maximum tree order, and one optional argument:\ the maximum vertex degree. The mode can be one of the following:
\begin{enumerate}[label=\textbf{(\roman*)}]
    \item \texttt{-c}:\ the program prints the number of trees of each order;
    \item \texttt{-g}:\ the program prints all the trees in the \texttt{graph6} format;
    \item \texttt{-s}:\ the program prints all the trees in the \texttt{sparse6} format;
    \item \texttt{-p}:\ for every tree, the program prints the parent of each non-root vertex.
\end{enumerate}

\section{Execution results}\label{sc_results}

\begin{table}[t]
\centering
\begin{tabular}{r|r||r|r||r|r}
$n$ & $\mathcal{TI}_n$ & $n$ & $\mathcal{TI}_n$ & $n$ & $\mathcal{TI}_n$\\
\hline\hline
1 & 1 & 15 & 82 & 29 & 3,277,565 \\
2 & 0 & 16 & 10 & 30 & 302,163 \\
3 & 0 & 17 & 324 & 31 & 16,926,170 \\
4 & 0 & 18 & 47 & 32 & 1,368,434 \\
5 & 0 & 19 & 1,574 & 33 & 83,965,665 \\
6 & 0 & 20 & 165 & 34 & 7,612,216 \\
7 & 1 & 21 & 6,944 & 35 & 409,768,230 \\
8 & 0 & 22 & 733 & 36 & 33,750,452 \\
9 & 1 & 23 & 30,913 & 37 & 2,140,396,213 \\
10 & 0 & 24 & 2,947 & 38 & 162,433,861 \\
11 & 6 & 25 & 143,690 & 39 & 11,148,732,100 \\
12 & 0 & 26 & 13,357 & 40 & 824,703,171 \\
13 & 24 & 27 & 702,945 & 41 & 56,072,411,946 \\
14 & 1 & 28 & 67,685 & 42 & 4,343,163,569 \\
\end{tabular}
\caption{$\mathcal{TI}_n$ denotes the number of nonisomorphic TI trees of order $n$.}
\label{brojke}
\end{table}

The developed program was tested on a desktop computer with a modern, high-end CPU. Generating and printing all the TI trees of order up to $26$ takes under $0.4$ seconds, while the program \texttt{gentreeg} from the package \texttt{nauty} takes approximately $22.6$ seconds to generate all the nonisomorphic trees of order at most $26$. The difference becomes even more pronounced when considering larger trees.

The program was used to count the number of nonisomorphic TI trees of all orders up to $42$ and the execution took less than $12$ hours. On the other hand, we estimate that \texttt{gentreeg} could take at least $7$ years just to generate the trees, not counting the time needed to select the TI trees among them. The number of nonisomorphic TI trees of each order $n \le 42$ is given in Table \ref{brojke}. Note that the algorithm takes the trivial tree to be TI, so that $\mathcal{TI}_1 = 1$.

\section*{Acknowledgements}

I.\ Damnjanović is supported in part by the Science Fund of the Republic of Serbia, grant \#6767, Lazy walk counts and spectral radius of threshold graphs --- LZWK.

\section*{Conflict of interest}

The authors declare that they have no conflict of interest.

\end{document}